\newcommand{\term}[1]	{\emph{\textcolor{blue}{#1}}}
\newcommand{\id}[1]		{\mathrm{#1}}
\newcommand{\basecase}	{\textbf{Base Case:~}}
\newcommand{\indhypo}	{\textbf{Induction Hypothesis:~}}
\newcommand{\indstep}	{\textbf{Induction Step:~}}
\newtheorem{theorem}{Theorem}
\newtheorem{definition}{Definition}
\newcommand{\BD}{\begin{definition}}
\newcommand{\ED}{\end{definition}}
\DeclareMathOperator{\Prop}{Prop}
\DeclareMathOperator{\Conf}{Conf}
\DeclareMathOperator{\sat}{sat}
\newcommand{\true}  {True~}
\newcommand{\false}  {False~}
\newcommand{\ldef}[1]{{\color{blue} #1}}
\newcommand{\souffle}{Souffl\'{e}}
\begin{document}
    
\copyrightyear{2019} 
\acmYear{2019} 
\acmConference[ESEC/FSE '19]{Proceedings of the 27th ACM Joint European Software Engineering Conference and Symposium on the Foundations of Software Engineering}{August 26--30, 2019}{Tallinn, Estonia}
\acmBooktitle{Proceedings of the 27th ACM Joint European Software Engineering Conference and Symposium on the Foundations of Software Engineering (ESEC/FSE '19), August 26--30, 2019, Tallinn, Estonia}
\acmPrice{15.00}
\acmDOI{10.1145/3338906.3338928}
\acmISBN{978-1-4503-5572-8/19/08}

\title{Lifting Datalog-Based Analyses to Software Product Lines}

\author{Ramy Shahin}
\email{rshahin@cs.toronto.edu}
\affiliation{
    \institution{University of Toronto}
    \country{Canada}
}
\author{Marsha Chechik}
\email{chechik@cs.toronto.edu}
\affiliation{
    \institution{University of Toronto}
    \country{Canada}
}
\author{Rick Salay}
\email{rsalay@cs.toronto.edu}
\affiliation{
    \institution{University of Toronto}
    \country{Canada}
}

\begin{abstract}
Applying program analyses to Software Product Lines (SPLs) has been a fundamental research problem at the intersection of Product Line Engineering and software analysis. Different attempts have been made to "lift" particular product-level analyses to run on the entire product line. In this paper, we tackle the class of Datalog-based analyses (e.g., pointer and taint analyses), study the theoretical aspects of lifting Datalog inference, and implement a lifted inference algorithm inside the \souffle~Datalog engine. We evaluate our implementation on a set of benchmark product lines. We show significant savings in processing time and fact database size (billions of times faster on one of the benchmarks) compared to brute-force analysis of each product individually. 
\end{abstract}

\begin{CCSXML}
    <ccs2012>
    <concept>
    <concept_id>10011007.10010940.10010992.10010998.10011000</concept_id>
    <concept_desc>Software and its engineering~Automated static analysis</concept_desc>
    <concept_significance>300</concept_significance>
    </concept>
    <concept>
    <concept_id>10011007.10011074.10011075.10011079.10011080</concept_id>
    <concept_desc>Software and its engineering~Software design techniques</concept_desc>
    <concept_significance>300</concept_significance>
    </concept>
    </ccs2012>
\end{CCSXML}

\ccsdesc[300]{Software and its engineering~Automated static analysis}
\ccsdesc[300]{Software and its engineering~Software design techniques}

\keywords{Software Product Lines, Datalog, Program Analysis, Pointer Analysis, Lifting, Doop, \souffle}

\maketitle

\newcommand{\resolveLifted}{\id{\widehat{infer}}}

\section{Introduction}
Software Product Lines (SPLs) are families of related products, usually developed together from a common set of artifacts. Each product configuration is a combination of features. As a result, the number of potential products is combinatorial in the number of features.
 This high level of configurability is usually desired. However, analysis tools (syntax analyzers, type checkers, model checkers, static analysis tools, etc...) typically work on a single product, not the whole SPL. Applying an analysis to each product separately is usually infeasible for non-trivial SPLs because of the exponential number of products~\cite{Liebig:2013}.

Since all products of an SPL share a common set of artifacts, analyzing each product individually (usually referred to as \term{brute-force analysis}) would involve a lot of redundancy. How to leverage this commonality and analyze the whole product line at once, bringing the total analysis time down, is a fundamental research problem at the intersection of Product Line Engineering and software analysis. Different attempts have been made to \term{lift} individual analyses to run on product lines~\cite{Bodden:2013, Classen:2010, Gazzillo:2012, Kastner:2011, Kastner:2012, Midtgaard:2015, Salay:2014}. Those attempts show significant time savings when the SPL is analyzed as a whole compared to brute-force analysis. The downside though is the amount of effort required to correctly lift each of those analyses.

In this paper, we tackle the class of Datalog-based program analyses. Datalog is a  declarative query language that adds logical inference to relational queries. Some program analyses (in particular, pointer and taint analyses) can be fully specified as sets of Datalog inference rules. Those rules are applied by an inference engine to facts extracted from a software product. Results are more facts, inferred by the engine based on the rules. The advantage of Datalog-based analyses is that they are declarative, concise and can be efficiently executed by highly optimized Datalog engines~\cite{Jordan:2016, Lhotak:2008}.

Instead of lifting individual Datalog-based analyses, we lift a Datalog engine. This way any analysis running on the lifted engine is lifted for free. Our approach is not specific to a particular engine though, and can be implemented in others. 

{\bf Contributions}
In this paper we make the following contributions:
(1) We present {$\resolveLifted$}, a Datalog inference algorithm lifted to facts extracted from Software Product Lines.
(2) We state the correctness criteria of lifted Datalog inference and show that {$\resolveLifted$} is correct. 
(3) We implement our lifted algorithm as a part of a Datalog engine. We also extend the Doop pointer analysis framework~\cite{Bravenboer:2009} to extract facts from SPLs.
(4) We evaluate our implementation on a sample of pointer and taint analyses applied to a suite of Java benchmarks. We show significant savings in processing time and fact database sizes compared to brute-force analysis of one product at a time. For one of the benchmarks, our lifted implementation is billions of times faster than brute-force analysis (with savings in database size of the same order of magnitude).

The rest of the paper starts with a background on SPLs and 
Datalog (Sec.~\ref{sec:background}).
We provide a theoretical treatment of Datalog inference, how the inference algorithm is lifted, together with correctness criteria and a correctness proof in Sec.~\ref{sec:lifting}. In Sec.~\ref{sec:implementation}, we describe the implementation of our algorithm in the \souffle~engine. Evaluation process and results are discussed in Sec.~\ref{sec:evaluation}. We compare our approach to related work in Sec.~\ref{sec:related} and conclude (Sec.~\ref{sec:conclusion}).
 
\section{Background}
\label{sec:background}
In this section, we summarize
the basic concepts of Software Product Lines, Horn Clauses, Datalog and Datalog-based analyses.

\newcommand{\pc}[1]	{pc_{#1}}
\newcommand{\SPL}  	{\mathcal{L}}
\newcommand{\feat}[1]{\bm{#1}} 
\newcommand{\featset}{F}
\newcommand{\featmodel}{\Phi}
\newcommand{\domainmodel}{D}
\newcommand{\pcmap}{\phi}
\newcommand{\config}{\rho}
\newcommand{\FA}{\feat{A}}
\newcommand{\FB}{\feat{B}}

\subsection{Software Product Lines}
A \term{Software Product Line (SPL)} is a family of related software products developed together.
Different variants of an SPL have different \term{features}, i.e., externally visible attributes such
as a piece of functionality, support for a particular peripheral device, or a performance optimization.

\BD[SPL]
An SPL $\SPL$ is a tuple $(\featset, \featmodel, \domainmodel, \pcmap)$ where:
(1) $\featset$ is the set of features s.t.  an individual product can be derived from $\SPL$ via a \term{feature configuration} $\config \subseteq \featset$.
(2) $\featmodel \in \Prop(\featset)$ is a propositional formula over $\featset$ defining the valid set of feature configurations. $\featmodel$ is called a \term{Feature Model (FM)}. The set of valid configurations defined by $\featmodel$ is called $\Conf(\SPL)$.
(3) $\domainmodel$ is a set of program elements, called the \term{domain model}. The whole set of program elements is sometimes referred to as the \term{150\% representation}. 
(4) $\pcmap:\domainmodel \to \Prop(\featset)$ is a total function mapping each program element to a proposition (\term{feature expression}) defined over the set of features $F$. $\pcmap(e)$ is called the \term{Presence Condition (PC)} of element $e$, i.e. the set of product configurations in which $e$ is present.
\ED

{\bf Example.}
Consider the \term{annotative} Java product line with feature set $F=\{FA, FB\}$, shown  in Listing~\ref{lst:SPL}.  Features are annotated using the \term{C Pre-Processor(CPP)} 
conditional compilation directives. By defining or not-defining macros corresponding to features, different products can be generated from this product line. 
One example is the product on Listing~\ref{lst:product}, with $FA$ not defined and $FB$ defined.

Here a single code-base (domain model $D$) is maintained, where different pieces of code are \term{annotated} with feature expressions. For example, tokens on line 10 are annotated with $\neg FA$.  That is, $\neg FA$
is the PC of these tokens. Similarly, tokens on line 13 have the PC $FB$.  This SPL allows all four
feature combinations, so its feature model $\featmodel$ is $\true$. 

\lstset{language=java}
\lstset{
    morekeywords={ifdef, endif}
}

\begin{figure}[t!]
    \centering
    \begin{lstlisting}[numbers=left, captionpos=b, caption=A product line with features $FA$ and $FB$., label=lst:SPL, morekeywords={\#ifdef, \#endif}]
    class Parent {public Object f;}
    class ClassA extends Parent {}
    class ClassB extends Parent {}
    
    ClassA o1 = new ClassA();
    ClassB o2 = new ClassB();
    #ifdef FA
    Parent o3 = o1;
    #else
    Parent o3 = o2;
    #endif
    #ifdef FB
    o2.f = o1;
    #else
    o2.f = o2;
    #endif
    Object r = o3.f;
    \end{lstlisting}
\end{figure}
\begin{figure}[t!]
    \centering
    \begin{lstlisting}[language=java, numbers=left, captionpos=b, caption=A product with a configeration $(\neg FA \land FB)$., label=lst:product]
    Class Parent {public Object f;}
    Class ClassA extends Parent {}
    Class ClassB extends Parent {}
    
    ClassA o1 = new ClassA();
    ClassB o2 = new ClassB();
    Parent o3 = o2;
    o2.f = o1;
    Object r = o3.f;
    \end{lstlisting}
\end{figure}

\newcommand{\fact}[1]{f_{#1}}
\newcommand{\premise}[1]{s_{#1}}
\newcommand{\colonminus}{\text{:-}}
\newcommand{\assignment}{\gamma}
\newcommand{\IDB}{IDB}
\newcommand{\EDB}{EDB}
\newcommand{\liftedEDB}{\widehat{\EDB}}

\subsection{Horn Clauses and Datalog}

\subsubsection{Horn Clauses}
A \term{Horn Clause (HC)} is a disjunction of unique propositional literals with at most one positive literal. For example, $(\neg a \vee \neg b \vee c \vee \neg d)$ is an HC which can be also written as a reverse-implication $(c \leftarrow (a \wedge b \wedge d))$, where $c$ is called the \term{head} and $(a \wedge b \wedge d)$ is called the \term{body} of the clause. 
The language of HCs 
is a fragment of Propositional Logic that can be checked for satisfiability in linear time, as opposed to general propositional satisfiability which is NP-complete~\cite{Huth:2004}. 
\subsubsection{Datalog}
Datalog is a declarative database query language that extends relational algebra with logical inference~\cite{Greco:2016}. Datalog inference \term{rules} are HCs in First Order Logic, where atoms are predicate expressions, not just propositional literals. A \term{fact} is a ground rule with only a head and no body. 
Syntactically, the ':-' symbol is usually used instead of backward implication, and atoms in the body are separated by commas
instead of the conjunction symbol.

Fig.~\ref{fig:HC_grammar} defines the grammar of Datalog clauses as follows: (1)
    building blocks are finite sets of constants, variables and predicate symbols;
    (2) a \emph{term} is a constant or a variable symbol;
    (3) a predicate expression is an $n$-ary predicate applied to arguments;
    (4) a \term{fact} is a ground predicate expression, i.e., all of its arguments are constants;
    (5) a \term{rule} is a Horn Clause of predicate expressions; and (6)
    a \emph{Datalog clause} is either a fact or a rule.

A Datalog program is a finite set of rules, usually referred to as the \term{Intensional Database (IDB)}, which operates on a finite set of facts called the \term{Extensional Database (EDB)}. The inference algorithm (explained next) repeatedly applies the rules to the facts, inferring new facts and adding them to the EDB, until a fixed point is reached (i.e., no more new facts can be inferred).

\begin{figure*}[t]
    \begin{tabular}{ll}
    \begin{subfigure}[c]{0.49\textwidth}
        \centering
        \[
        \begin{array}{lcl}
        S &  = & \{\text{finite set of constant symbols}\} 	\\
        V &  = & \{\text{finite set of variables}\}			\\
        P &  = & \{\text{finite set of predicate symbols}\}	  \\
        T & ::= & S~|~V \\
        L & ::= & P(T_1, ..., T_n)	\\
        F & ::= & P(S_1, ..., S_m)    \\
        R & ::= & L_0~\colonminus~L_1, ..., L_k \\
        D & ::= & F~|~R
        \end{array}
        \]
        \caption{Datalog Grammar.}
        \label{fig:HC_grammar}
    \end{subfigure}
    \begin{tabular}{l}
    \begin{subfigure}[c]{0.49\textwidth}
         \centering
        $\assignment : V \to S$ \\
        $[\assignment]C = C[v/\assignment(v)]$ , for each free variable $v$ in $C$ \\ 
       \caption{Variable assignment function and substitution for clause $C$.}
        \label{fig:HC_subst}
    \end{subfigure}
    \\
    \\
    \begin{subfigure}[c]{0.49\textwidth}
        \centering
        \[
        \begin{array}{lcl}
        PC & ::= & f~|~\neg PC~|~PC \land PC~|~PC \lor PC \\
        \widehat{D} & ::= & (F, PC)~|~R 
        \end{array}
        \]
        \caption{Grammar for lifted Datalog clauses. Syntactic category $f$ is set of feature names.}
        \label{fig:HC_lifted_grammar}
    \end{subfigure}
\end{tabular}
\end{tabular}
    \caption{(a) Grammar of Datalog clauses, (b) variable assignment function and substitution, and (c) lifted Datalog clauses.} 
\end{figure*}

\subsubsection{Inference Algorithm (Algorithm~\ref{alg:inference})}
For each rule $R$, the algorithm checks to see if the EDB has facts fulfilling the premises of $R$, with a consistent assignment of variables to constants (Fig.~\ref{fig:HC_subst}). If it does, the head of that rule is inferred as a new fact $F$. If $F$ doesn't already exist in the EDB, it is added to it. Newly inferred facts may trigger some of the rules again; this process continues until a fixed point
is reached, i.e., no new facts are inferred. This algorithm (called the \term{forward chaining} algorithm~\cite{Ceri:1989})
is guaranteed to terminate because it does not create any new constants, and
runs in polynomial time w.r.t. the number of input clauses~\cite{Ceri:1989}.

\begin{algorithm}
    \caption{Inference algorithm \emph{infer} (forward chaining).}
    \label{alg:inference}
    \SetAlgoLined
    \KwData{input: IDB, EDB}
    \KwResult{EDB + inferred clauses}
    \Repeat{fixpoint}
    {
        fixpoint = True\;
        \ForEach
        {$(C$ \colonminus~$\premise{1}, ..., \premise{n}) \in IDB$}
        {\ForEach{$(\assignment, \fact{1},...,\fact{n}), \fact{i} \in EDB, [\assignment]\premise{i} = \fact{i}$}
             {\If {$[\assignment]C \notin EDB$} 
                {fixpoint = False\; $EDB = EDB \cup \{[\assignment]C\}$}
             }
        }
    }
    return EDB\;
\end{algorithm}

\subsection{Example of a Datalog Analysis}
Some program analyses~\cite{Benton:2007, Bravenboer:2009, Dawson:1996, Grech:2017} can be written in Datalog as sets of clauses. Facts relevant to the analysis are extracted from the program to be analyzed, and then fed into a Datalog engine together with the analysis clauses.
Fact extraction is usually analysis-specific because different analyses work on different aspects of the program. One example of Datalog-based analyses is pointer analysis.

\begin{figure*}[t]
\begin{tabular}{ll}
\begin{subfigure}{0.49\textwidth}
\begin{lstlisting}
VarPointsTo(v1, h1) :- New(v1, h1).
VarPointsTo(v1, h2) :- 
    Assign(v1, v2), VarPointsTo(v2, h2).
VarPointsTo(v1, h2) :- 
    Load(v1, v2, f),
    VarPointsTo(v2, h1),
    HeapPointsTo(h1, f, h2).
HeapPointsTo(h1, f, h2) :- 
    Store(v1, f, v2),
    VarPointsTo(v1, h1),
    VarPointsTo(v2, h2).
\end{lstlisting}
\caption{Pointer Analysis Rules.}
\label{fig:pointer_analysis_rules}
\end{subfigure}
\begin{tabular}{l}
\begin{subfigure}{0.49\textwidth}
\centering
\begin{lstlisting}
New("o1", "A").		// line 5
New("o2", "B"). 	// line 6
Assign("o3", "o2"). 	// line 7
Store("o2", "f", "o1"). // line 8
Load("r", "o3", "f").   // line 9
\end{lstlisting}
\caption{Facts extracted from Listing~\ref{lst:product}.}
\label{fig:pointer_analysis_facts}
\end{subfigure} \\
\begin{subfigure}{0.49\textwidth}
\begin{lstlisting}
VarPointsTo("o1", "A").
VarPointsTo("o2", "B").
VarPointsTo("o3", "B").
HeapPointsTo("B", "f", "A").
VarPointsTo("r", "A").
\end{lstlisting}
\caption{Results of applying the rules to the extracted facts.}
\label{fig:pointer_analysis_results}
\end{subfigure}
\end{tabular}
\end{tabular}
\caption{(a) Context-insensitive pointer analysis rules (simplistic), (b)input facts, and (c) output facts for program in Listing~\ref{lst:product}.}
\end{figure*}
\term{Pointer analysis}~\cite{Smaragdakis:2015} determines which objects might be pointed to by a particular program expression. This whole-program analysis is 
%
\term{over-approximating} in the sense that it returns a set of objects that \emph{might} be pointed to by each pointer, possibly with false positives.
Fig.~\ref{fig:pointer_analysis_rules} shows a set of Datalog rules for a simple pointer analysis~\cite{Madsen:2016}.  
Each predicate defines a relation between different artifacts. For example, \term{VarPointsTo(v,h)} states that pointer $v$ might point to heap object $h$. The first three rules specify the conditions for this predicate to hold: either a new object is allocated and a pointer is initialized; a pointer that already points to an object is assigned to another pointer; or an object field points to a heap object, and that field is assigned to another pointer. The fourth rule states that assigning a value to an object field results in that field pointing to the same object as the right-hand-side of the assignment.

Fig.~\ref{fig:pointer_analysis_facts} shows the facts corresponding to the program in Listing~\ref{lst:product}. The first two are object allocation facts; the third is an assignment fact, and the fourth and the fifth are store and load facts, respectively.
Fig.~\ref{fig:pointer_analysis_results} is the results of running the Datalog inference algorithm on those rules and facts.
The example in Fig.~\ref{fig:pointer_analysis_rules} is called a \term{context-insensitive} pointer analysis because it does not distinguish between different objects, call sites and types in a class hierarchy. More precise context-sensitive pointer analyses take different kinds of context into consideration.
For example, a \term{1-call-site-sensitive} analysis
considers method call sites.
 A \term{1-object-sensitive} analysis (similarly, \term{1-type-sensitive}) includes object allocation sites (types of objects allocated) as part of the context. 

\newcommand{\resolve}{\id{infer}}

\newcommand{\features}{\id{features}}
\newcommand{\products}{\id{products}}
\newcommand{\indexProd}[2]{#1|_#2}

\section{Lifting Datalog}
\label{sec:lifting}
In this section, we present our approach to lifting Datalog abstract syntax and the Datalog inference algorithm. We also formally state the correctness criteria for lifted Datalog inference, and outline a correctness proof of our lifted algorithm.
%

\begin{figure*}[t]
    \centering
    \begin{subfigure}[b]{0.9\textwidth}
        \[
        \infer[MP]{[\assignment]C}{
            C \colonminus~\premise{1}, ..., \premise{n}
            & [\assignment]\premise{1} = \fact{1}
            & ...
            & [\assignment]\premise{n} = \fact{n}
            & \forall(1 \leq i \leq n), \fact{i} \in EDB
            & \assignment:V \to S
        }
        \]
        \label{fig:MP}
    \end{subfigure}
    \vfill
    \begin{subfigure}[b]{0.9\textwidth}
        \centering
        \[
        \infer[\widehat{MP}]{([\assignment]C, \ldef{pc_1 \wedge ... \wedge pc_n})} {%
            C \colonminus~\premise{1}, ..., \premise{n}
            & [\assignment]\premise{1} = \fact{1}
            & ...
            & [\assignment]\premise{n} = \fact{n}
            & \forall(1 \leq i \leq n), (\fact{i}, \ldef{\pc{i}}) \in \liftedEDB
            & \assignment:V \to S
        }
        \]
        \label{fig:MPlifted}
    \end{subfigure}
    \vspace{-0.2in}
    \caption{Modus ponens for (a) Datalog clauses and (b) lifted Datalog clause inference.}
    \label{fig:MPall}
\end{figure*}

\subsection{Annotated Datalog Clauses}
When analyzing a single software product, an initial set of facts is extracted from product artifacts, and analysis rules are applied to those facts, eventually adding newly inferred facts to the initial set. In the case of SPLs, a fact might be valid only in a subset of products, and not necessarily the entire product space. We have to associate a representation of that subset with each of the extracted facts. Similar to SPL annotation techniques, a \term{Presence Condition (PC)} is a succinct representation that can be used to annotate facts.

Facts annotated with PCs are called \term{lifted facts}, and are stored in a lifted Extensional Database -- \term{$\liftedEDB$}. Given a feature expression $\config$, we define $\indexProd{\liftedEDB}{\config}$ to be the set of facts from $\liftedEDB$ which only exist in the product set defined by $\config$:
\[
\indexProd{\liftedEDB}{\config} = \{f~|~(f, \pc{}) \in \liftedEDB \land \sat(\pc{} \land \config)\}
\]

When the Datalog inference algorithm is applied to annotated facts, we have to take the PCs attached to facts into account. Whenever the inference algorithm generates a new fact, we need to associate a PC to it. If $\fact{new}$ is generated from premises $\fact{1}, \fact{2}, ..., \fact{n}$, with PCs $\pc{1},...,\pc{n}$, then $\pc{new}$ attached to $\fact{new}$ should be the conjunction of the input PCs, i.e., $\pc{1} \land ... \land \pc{n}$. Intuitively, $\pc{new}$ represents the set of products in which $\fact{new}$ exists, which is the intersection of the sets of products in which the premises exist.
    
To avoid having too many generated facts that are practically vacuous, we check $\pc{new}$ for satisfiability. If it isn't satisfiable, then its corresponding fact exists in the empty set of products, i.e., non-existent. Those facts can be safely removed from $\liftedEDB$, potentially improving the performance of inference.

    
\subsection{Lifted Inference Algorithm}
Algorithm~\ref{alg:inference_lifted} takes a set of Datalog rules (IDB) and a set of annotated facts ($\liftedEDB$) as input, and returns all inferred clauses, annotated with their corresponding presence conditions. The structure of this algorithm is similar to that of Algorithm~\ref{alg:inference}, with the exception of conjoining the presence conditions of the facts used in inference, and assigning the conjunction as the presence condition of the result. There are four cases for $(c, \pc{c})$ to consider:
(1) if $\sat(\pc{c})$ is \false ($\pc{c}$ is not satisfiable), then this result is ignored because it doesn't exist in any valid product;
(2) if $(c, \pc{c}) \in \liftedEDB$, then this result is also ignored because it already exists for the same set of products;
(3) if $(c, \pc{d}) \in \liftedEDB$, where $\pc{d} \neq \pc{c}$, then $(c, \pc{d})$ is replaced with $(c,\pc{d} \lor \pc{c})$ in $\liftedEDB$. This means we are expanding the already existing set of products in which $c$ exists to also include the set denoted by $\pc{c}$;
(4) if $c$ doesn't exist at all in $\liftedEDB$, we add $(c, \pc{c})$ to it.
For example, when the lifted inference algorithm is applied to the rules in Fig.~\ref{fig:pointer_analysis_rules} and annotated facts in Fig.~\ref{fig:pointer_analysis_SPL_facts}, the result is the following:
\begin{lstlisting}
VarPointsTo("o1", "A")         @ True.
VarPointsTo("o2", "B")         @ True.
VarPointsTo("o3", "A")         @ FA.
VarPointsTo("o3", "B")         @ !FA.
HeapPointsTo("B", "f", "A")    @ FB.
HeapPointsTo("B", "f", "B")    @ !FB.
VarPointsTo("r", "A")          @ FA.
VarPointsTo("r", "B")          @ !FA.
\end{lstlisting} 

\begin{algorithm}
    \caption{Lifted inference algorithm $\resolveLifted$.}
    \label{alg:inference_lifted}
    \SetAlgoLined
    \KwData{input: $IDB$, $\liftedEDB$}
    \KwResult{$\liftedEDB$ + annotated inferred clauses}
    \Repeat{fixpoint}
    {
        fixpoint = \true\;
         \ForEach
        {$(C$ \colonminus~$\premise{1}, ..., \premise{n}) \in IDB$}
        {\ForEach{$(\assignment, (\fact{1}, \ldef{\pc{1}}),...,(\fact{n}, \ldef{\pc{n}})), (\fact{i}, \ldef{\pc{i}}) \in \liftedEDB, [\assignment]\premise{i} = \fact{i}$}
            {
                \ldef{$\pc{c} = \pc{1} \land ... \land \pc{n}$}\;
                \If{\ldef{sat($\pc{c}$)}}
                {\If{$([\assignment]C, \ldef{\pc{c}}) \notin \liftedEDB$}
                    {
                        fixpoint = \false\;
                        \If{\ldef{$\exists \pc{d}, ([\assignment]C, \pc{d}) \in \liftedEDB$}}
                        {\ldef{$\pc{c} = \pc{c} \lor \pc{d}$\;
                                $\liftedEDB = \liftedEDB - \{([\assignment]C, \pc{d})\}$}}
                        \ldef{$\liftedEDB = \liftedEDB \cup \{[\assignment]C, \pc{c})\}$}}
                }
            }
        }
    }
    return $\liftedEDB$\;
\end{algorithm}

\subsection{Correctness Criteria}
When applying the lifted inference algorithm $\resolveLifted$ to a set of rules $IDB$ and a set of annotated facts $\liftedEDB$, we expect the result to be exactly the union of the results of applying $\resolve$ to facts from each product individually. Moreover, each clause in the result of $\resolveLifted$ has to be properly annotated (i.e., its presence condition has to represent exactly the set of products having this clause in their un-lifted  analysis results).
\begin{theorem}
    Given an SPL $\SPL=(\featset, \featmodel, \domainmodel, \pcmap)$, a set of rules $IDB$, and a set of lifted facts $\liftedEDB$ annotated with feature expressions over $\featset$:
    \[
        \forall (\config \in \Conf(\SPL)), \indexProd{\resolveLifted(\liftedEDB)}{\config} = \resolve(\indexProd{\liftedEDB}{\config}) 
    \]
\label{th:correctness}
\end{theorem}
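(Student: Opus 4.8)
The plan is to fix an arbitrary configuration $\config \in \Conf(\SPL)$ and prove the two inclusions of the set equality separately, routing both through a normal-form description of the presence conditions that $\resolveLifted$ attaches to inferred facts. As a preliminary I would record two structural facts about Algorithm~\ref{alg:inference_lifted} that are immediate from its code: (i) at every point of the run, and hence in the output $\resolveLifted(\liftedEDB)$, there is \emph{at most one} pair $(f,\pc{f})$ for each fact $f$, every input pair of $\liftedEDB$ is present initially, and the presence condition stored for a fact only ever becomes logically weaker (it is replaced by its disjunction with the old one); and (ii) the output is \emph{closed under} $\widehat{MP}$ in the sense that whenever a rule $C \colonminus \premise{1}, \dots, \premise{n} \in IDB$ and an assignment $\assignment$ satisfy $[\assignment]\premise{i} = \fact{i}$ with $(\fact{i}, \pc{i}) \in \resolveLifted(\liftedEDB)$ and $\sat(\pc{1} \land \dots \land \pc{n})$, the output already contains a pair $([\assignment]C, r)$ with $\pc{1} \land \dots \land \pc{n} \models r$. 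Establishing (ii) requires that Algorithm~\ref{alg:inference_lifted} terminates; this I would argue from the fact that no new constants are created (finitely many facts $[\assignment]C$) together with the observation that the stored presence condition of a fact can strictly weaken only finitely often, since there are finitely many feature expressions over $\featset$ up to logical equivalence — so the membership and $\exists \pc{d}$ tests in the algorithm are read semantically.

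The crux is a lemma characterizing the presence condition $\pc{f}$ that $\resolveLifted(\liftedEDB)$ assigns to a fact $f$. Writing $\mathcal{D}_f$ for the set of $\widehat{MP}$-derivation trees of $f$ from $\liftedEDB$ using $IDB$ (internal nodes are rule applications with variable assignments exactly as in $MP$, leaves are elements of $\liftedEDB$), and $\mathrm{val}(\delta) = \bigwedge\{q \mid (g,q)\text{ a leaf of }\delta\}$, the lemma states $\pc{f} \equiv \bigvee_{\delta \in \mathcal{D}_f}\mathrm{val}(\delta)$, and $f$ does not occur in $\resolveLifted(\liftedEDB)$ when $\mathcal{D}_f = \emptyset$. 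Since a tree in which a fact recurs on a root-to-leaf path has value entailed by that of a shorter tree, it suffices to range over the finitely many reduced trees, so the disjunction is well defined. The inclusion $\pc{f} \models \bigvee_\delta \mathrm{val}(\delta)$ I would prove by induction over the execution of Algorithm~\ref{alg:inference_lifted}, maintaining the invariant ``every stored pair $(g,q)$ satisfies $q \models \bigvee_{\delta \in \mathcal{D}_g}\mathrm{val}(\delta)$''; this is preserved when a new pair $([\assignment]C,\, (\pc{1} \land \dots \land \pc{n}) \lor \pc{d})$ is formed, because distributing $\bigwedge_i \bigvee_{\delta_i}\mathrm{val}(\delta_i)$ yields a disjunction of values of the combined trees. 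The reverse inclusion $\bigvee_\delta \mathrm{val}(\delta) \models \pc{f}$ I would prove by induction on the structure of $\delta \in \mathcal{D}_f$: a leaf is handled by (i); for an internal node with children $\delta_1,\dots,\delta_n$ deriving $\fact{1},\dots,\fact{n}$, the induction hypothesis gives $\mathrm{val}(\delta_i) \models \pc{\fact{i}}$, and if $\bigwedge_i \pc{\fact{i}}$ is satisfiable then (ii) supplies $([\assignment]C, r)$ with $\bigwedge_i \pc{\fact{i}} \models r$ and $r \equiv \pc{f}$ by (i), whence $\mathrm{val}(\delta) = \bigwedge_i \mathrm{val}(\delta_i) \models \bigwedge_i \pc{\fact{i}} \models \pc{f}$, while if it is unsatisfiable then $\mathrm{val}(\delta)$ is too and the entailment is vacuous.

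Given the lemma the theorem follows quickly. Observe first that a configuration $\config \in \Conf(\SPL)$ is a \emph{complete} truth assignment to $\featset$ (one that satisfies $\featmodel$), so $\sat(\pc{}\land\config)$ is equivalent to $\config \models \pc{}$; beyond restricting the quantifier, $\featmodel$ plays no role, the claim being pointwise in $\config$. For $\indexProd{\resolveLifted(\liftedEDB)}{\config} \subseteq \resolve(\indexProd{\liftedEDB}{\config})$: if $f$ is on the left then $\config \models \pc{f} \equiv \bigvee_\delta \mathrm{val}(\delta)$, so $\config \models \mathrm{val}(\delta)$ for some $\delta \in \mathcal{D}_f$; then every leaf $(g,q)$ of $\delta$ has $\config \models q$, i.e. $g \in \indexProd{\liftedEDB}{\config}$, and processing $\delta$ bottom-up with $MP$ (legal since each internal node of $\delta$ is an $MP$ instance) places $f$ in $\resolve(\indexProd{\liftedEDB}{\config})$. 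For the converse: if $f \in \resolve(\indexProd{\liftedEDB}{\config})$ it has an $MP$-derivation tree $\delta$ whose leaves $(g,q)$ all satisfy $\config \models q$; the same tree lies in $\mathcal{D}_f$, so $\config \models \mathrm{val}(\delta)$, and since $\mathrm{val}(\delta)$ is a disjunct of something equivalent to $\pc{f}$ we get $\config \models \pc{f}$, i.e. $\sat(\pc{f}\land\config)$; as $(f,\pc{f}) \in \resolveLifted(\liftedEDB)$, this puts $f$ in $\indexProd{\resolveLifted(\liftedEDB)}{\config}$.

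I expect the main obstacle to be the normal-form lemma — specifically, reconciling the ``use whatever presence condition is currently stored'' behaviour of the iterative algorithm with the static ``aggregate over all derivation trees'' description, and discharging termination of Algorithm~\ref{alg:inference_lifted} cleanly enough to state its fixpoint property (ii). Once those are settled, together with the small observation that a valid configuration is a complete assignment (so that $\sat(\pc{}\land\config)$ coincides with $\config \models \pc{}$), both inclusions are routine structural inductions along derivation trees.
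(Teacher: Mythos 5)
Your proposal is correct, and it actually establishes something the paper's own argument only sketches. The paper also argues by a two-way induction over derivation trees, but in its forward direction it applies the induction hypothesis to the premises $[\assignment]\premise{i}$ without justifying that they lie in $\indexProd{\resolveLifted(\liftedEDB)}{\config}$, i.e.\ that their \emph{final} stored presence conditions are consistent with $\config$. That is not automatic, because Algorithm~\ref{alg:inference_lifted} repeatedly weakens stored PCs by disjunction, so the PC eventually attached to $C$ need not be the conjunction produced by any single $\widehat{MP}$ step. Your normal-form lemma $\pc{f} \equiv \bigvee_{\delta \in \mathcal{D}_f}\mathrm{val}(\delta)$ is exactly the missing ingredient: it lets you pick, from a disjunct satisfied by $\config$, one derivation tree all of whose leaves survive restriction to $\config$, and conversely it shows that every $MP$-derivation contributes its leaf-conjunction as a disjunct of the output PC. The observation that a configuration is a complete assignment (so $\sat(\pc{}\land\config)$ coincides with $\config\models\pc{}$), which the paper never makes explicit, is likewise needed to extract that witnessing disjunct. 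Two small caveats, neither fatal: (a) for your closure property (ii) and for termination, the test $([\assignment]C,\pc{c})\notin\liftedEDB$ must be read as ``the derived conjunction does not already entail the PC currently stored for $[\assignment]C$''; under a purely syntactic, or even equivalence-of-pairs, reading the algorithm keeps re-deriving a semantically unchanged pair and never reaches the fixpoint, so you should state this interpretation explicitly (it is what the canonical-BDD implementation effectively does). (b) The lemma should say that $f$ occurs in the output iff some $\mathrm{val}(\delta)$ is satisfiable, not merely iff $\mathcal{D}_f\neq\emptyset$, since the algorithm discards unsatisfiable conjunctions; this does not affect the theorem, because both sides of the claimed equality only ever involve satisfiable values.
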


\begin{proof}
\begin{itemize}
    \item $C \in \indexProd{\resolveLifted(\liftedEDB)}{\config} \implies C \in \resolve(\indexProd{\liftedEDB}{\config})$ \\
        By structural induction over the derivation tree of $C$: \\
        \basecase $(C, \pc{}) \in \liftedEDB$, where $\sat(\pc{} \land \config)$. Then $C \in \indexProd{\liftedEDB}{\config}$ (by definition of restriction operator). Since inputs are already included in the output of $\resolve$, $C \in \resolve(\indexProd{\liftedEDB}{\config})$. \\
        \indhypo Given a rule $R = C~\text{\colonminus}~\premise{1}, ..., \premise{n}$, and a variable assignment $\assignment$, 
        \[ \forall(1 \leq i \leq n):[\assignment]\premise{i} \in \indexProd{\resolveLifted(\liftedEDB)}{\config} \implies [\assignment]\premise{i} \in \resolve(\indexProd{\liftedEDB}{\config}) \]
        
        \indstep $C$ is derived by $\widehat{MP}$ (Fig.~\ref{fig:MPall}) from rule $R$. Since all the premises of $C$ are in $\resolve(\indexProd{\liftedEDB}{\config})$ (induction hypothesis), then so is $C$ ($MP$).
    \item $C \in \resolve(\indexProd{\liftedEDB}{\config}) \implies C \in \indexProd{\resolveLifted(\liftedEDB)}{\config}$ \\
        By structural induction over the derivation tree of $C$: \\
        \basecase Assume $(C, \pc{}) \in \liftedEDB$, for some $\pc{}$, where $\sat(\pc{} \land \config)$. Then $(C,pc) \in \resolveLifted(\liftedEDB)$ (input included in output of $\resolveLifted$). Since $\pc{} \land \config$ is satisfiable, then $C \in \indexProd{\resolveLifted(\liftedEDB)}{\config}$ (definition of restriction).\\
        \indhypo Given a rule $R = C~\text{\colonminus}~\premise{1}, ..., \premise{n}$, and a variable assignment $\assignment$, 
        \[ \forall(1 \leq i \leq n):[\assignment]\premise{i} \in \resolve(\indexProd{\liftedEDB}{\config}) \implies [\assignment]\premise{i} \in \indexProd{\resolveLifted(\liftedEDB)}{\config} \]
        \indstep $C$ is derived by $MP$ (Fig.~\ref{fig:MPall}) from rule $R$. Since all the premises of $C$ are in $\indexProd{\resolveLifted(\liftedEDB)}{\config}$ (induction hypothesis), then so is $C$ ($\widehat{MP}$).
\end{itemize}    
\end{proof}
\section{Implementation}
\label{sec:implementation}
In this section, we explain how we lift the Doop pointer and taint analysis framework, together with its underlying \souffle~Datalog engine.
%
\subsection{Lifting Doop}
\begin{figure}[t]
    \centering
    \includegraphics[width=0.45\textwidth]{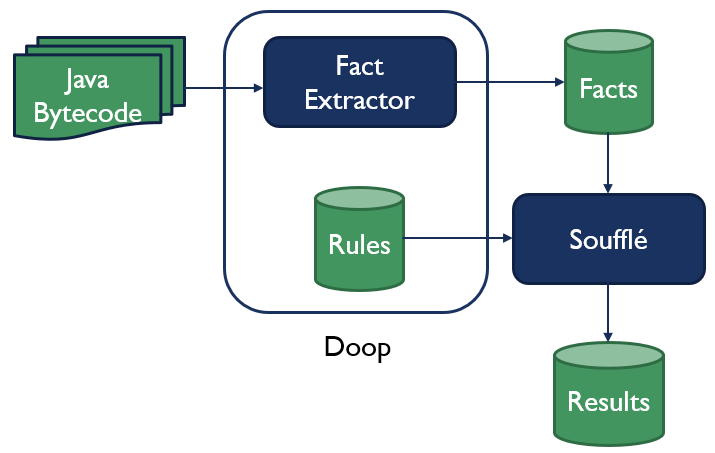}
    \caption{The Doop architecture.}
    \label{fig:doopArch}
\end{figure}
To illustrate and evaluate the Datalog lifting approach outlined in Sec.~\ref{sec:lifting}, we modified the Doop~\cite{Bravenboer:2009} Datalog-based pointer analysis framework \footnote{Available online at \url{https://bitbucket.org/rshahin/doop}}, together with its underlying \souffle~\cite{Jordan:2016} Datalog engine \footnote{Available online at \url{https://github.com/ramyshahin/souffle}}. Fig.~\ref{fig:doopArch} outlines the Doop architecture. Doop is an extensible family of pointer and taint analyses implemented as Datalog rules. In addition, it includes a fact extractor from Java bytecode. Doop users select a particular analysis among the available analyses through a command-line argument. The rules corresponding to the chosen analysis (the IDB), together with the extracted facts (the EDB), are then passed to \souffle.

%
%
\begin{figure}[t]
\centering
\begin{lstlisting}
New("o1", "A")	    	@ True.	// line 5
New("o2", "B")	    	@ True. // line 6
Assign("o3", "o1")	@ FA.	// line 8
Assign("o3", "o2")	@ !FA. 	// line 10
Store("o2", "f", "o1")	@ FB. 	// line 13
Store("o2", "f", "o2")	@ !FB. 	// line 15
Load("r", "o3", "f")	@ True. // line 17
\end{lstlisting}
\caption{Annotated facts extracted from Listing~\ref{lst:SPL}.} 
\label{fig:pointer_analysis_SPL_facts}
\end{figure}

Since Doop extracts syntactic facts, we need to identify the PCs of each of the syntactic tokens contributing to a fact, and associate the conjunction of those PCs as the fact PC. We had to do this for each type of fact extracted by Doop. The fact PC is just added to a fact as a trailing PC field, prefixed with '@'. Facts with no PC field are assumed to belong to all products (an implicit PC of \true).

Our Doop modifications were only in the fact extractor. None of the Doop Datalog rules were changed. Our fact extraction modifications were scattered because extractors for different kinds of facts are implemented separately in Doop. However, all those changes were systematic and non-invasive. In total we modified only about 100 lines of code in the Doop fact extractor. 

\subsection{Lifting \souffle}
\begin{figure}[t]
    \centering
    \includegraphics[width=0.45\textwidth]{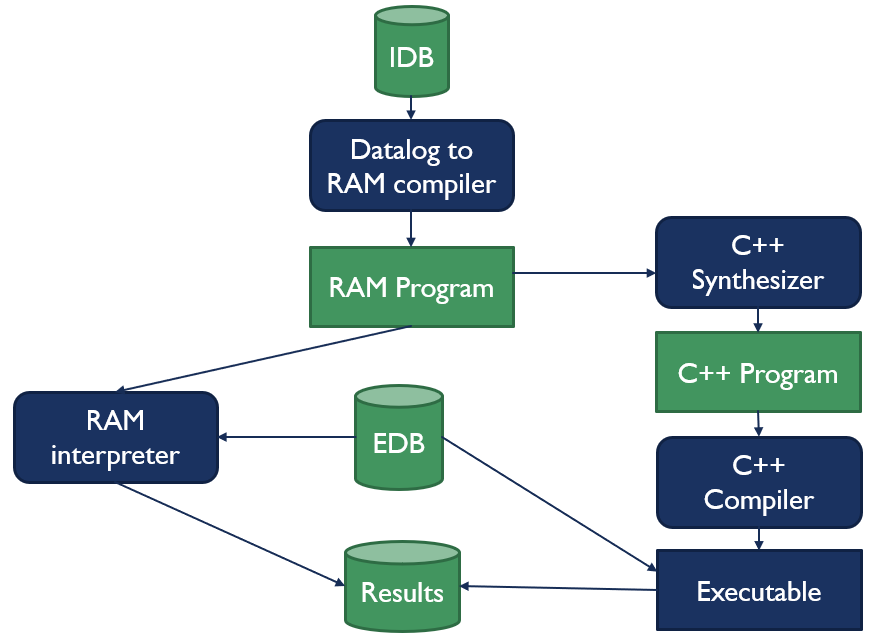}
    \caption{\souffle~architecture.}
    \label{fig:souffleArch}
\end{figure}
As seen in Fig.~\ref{fig:souffleArch}, a \souffle~program is first parsed and translated into a Relational Algebra Machine (RAM) program. RAM is a language with  relational algebra constructs, in addition to a fixed-point looping operator. Based on a command-line argument, \souffle~then either interprets the RAM program on the fly, or synthesizes C++ code that is semantically equivalent to the RAM program. Since C++ programs are compiled (typically by optimizing compilers) into native machine code, native executables are at least an order of magnitude faster than interpreted analyses~\cite{Jordan:2016}. In this paper, we only cover the \souffle~interpreter.
%

At the syntax level, we extend the \souffle~language with \term{fact annotations}. Those are propositional formulas prefixed with '@'. The \souffle~parser is extended with a syntactic category for propositional formulas. AST nodes for facts are extended with a PC field, with a default value of \term{True}. Propositional variables are added to a symbol table separate from that holding \souffle~identifiers.

As a part of compiling \souffle~programs into RAM, we turn syntactic presence conditions into Binary Decision Diagrams (BDDs). We use CUDD~\cite{Somenzi:1998} as a BDD engine, and on top of it maintain a map from textual presence conditions to their corresponding canonical BDDs. As stated in $\resolveLifted$, when facts are resolved with a rule, the conjunction of their PCs becomes the conclusion's PC. 

\souffle~implements several indexing and query optimization techniques to improve inference time. To keep our changes independent of those optimizations, we add the presence condition as a field opaque to the query engine. We only manipulate this field as a PC when performing clause resolution, which takes place at a higher level than the details of indexing and query processing. This way we avoid touching relatively complex optimization code, while preserving the semantics of our lifted inference algorithm.

Some relational features of \souffle~were not lifted. For example, aggregation functions (sum, average, max, min, etc...) still return singleton values. None of those functions is used by Doop on lifted facts, so this does not affect the correctness of our results. We still plan to address this general limitation in the future though.

\newcommand{\insens}{\term{insens}}
\newcommand{\TypeHeap}{\term{1Type+Heap}}
\newcommand{\CallHeapTaint}{\term{Taint-1Call+Heap}}
\newcommand{\RQ}[1]{\textbf{RQ#1}}

\section{Evaluation}
\label{sec:evaluation}
\begin{table}[htbp]
\caption{Java product lines used for evaluation.}
\label{tbl:benchmarks}
\vspace{-0.1in}
\begin{tabular}{lccc}
    \toprule
    Benchmark & Size (KLOC) & Features & Valid Configurations \\
    \midrule
    BerkeleyDB & 70 & 42 & 8,759,844,864 \\
    GPL & 1.4 & 21 & 4,176 \\
    Lampiro & 45 & 18 & 2,048 \\
    MM08 & 5.7 & 27 & 784 \\
    Prevayler & 7.9 & 5 & 32 \\
    \bottomrule
\end{tabular}
\end{table}
\vspace{-0.1in}
We evaluate the performance of our lifted version of Doop (together with lifted \souffle) on five Java benchmark product lines (previously used in the evaluation of other lifted analyses~\cite{Brabrand:2012,Bodden:2013}). For each of the benchmarks, Table~\ref{tbl:benchmarks} lists its size (in thousands of lines of code), number of features, and number of  valid configurations according to its feature model. For example, BerkeleyDB is about 70,000 lines of code, comprised of 42 features, and has about 8.76 billion valid product configurations.  We evaluate three Doop analyses: context-insensitive pointer analysis (\insens), one-type heap-sensitive pointer analysis (\TypeHeap), and one-call-site heap-sensitive taint analysis (\CallHeapTaint). For taint analysis, we use the default sources, sinks, transform and sanitization functions curated in Doop for the JDK and Android~\cite{Grech:2017}. All experiments were performed on a Quad-core Intel Core i7-6700 processor running at 3.4GHZ, with 16GB RAM and hyper-threading enabled, running 64-bit Ubuntu Linux (kernel version 4.15). 

Pointer and taint analyses work on the whole program, including library dependencies. Since general-purpose libraries usually do not have any variability, the comparison between lifted and single-product analyses is independent of them. Moreover, time spent in analyzing library code, and space taken by their facts, might skew the overall results. We restrict our experiments to application code and direct dependencies only using the Doop command-line argument "{\tt --Xfacts-subset APP\_N\_DEPS}".

Doop extracts its facts from Java byte-code. However, SPL annotation techniques work at the source-code level. Feature selection usually takes place at compile-time, which means an SPL codebase is compiled into a single product. To get around this limitation, we had to choose benchmarks that only have \term{disciplined annotations}~\cite{Kastner:2009}, in the sense that adding or removing an annotation preserves the syntactic correctness of the 150\% representation. This is not a limitation of our lifted inference algorithm though.  

The benchmarks we chose are annotated using CIDE~\cite{Kastner:2009}, which uses different highlighting colors as presence conditions. We had to extract this color information from CIDE, together with the mapping from colors to locations of tokens (line and column number) in source files. Our fact extractor uses byte-code symbol information to locate tokens, and assign their presence conditions based on CIDE colors.

The primary goal of our experiments is to compare the performance of lifted analyses applied to the SPLs to that of running the corresponding product-level analyses on each of the valid configurations individually. Since the number of valid product configurations for some benchmarks is relatively big, it is neither practical nor particularly useful to enumerate all of the valid products and analyze them. Instead, for each SPL, we run the product-level analysis on two code-base subsets: the base code common across all variants, and the 150\% representation (the whole SPL code-base, implementing all feature behaviors). Although those two extremes are not necessarily valid products, they are the lower bound and the upper bound in terms of code size, and averaging over them gives an "average" valid product approximation. The \emph{expected brute-force performance} is the average valid product performance (\term{P-Avg}) multiplied by the number of valid configurations.

We split our evaluation into two parts: fact extraction and inference, and  evaluate performance in terms of both the processing time and space (size of the fact database in kilobytes(KB)). Our primary research questions are:

\RQ{1}: How do fact extraction time (and size of the extracted fact database) of lifted analyses compare to brute-force fact extraction?

\RQ{2}: How do the \souffle\; inference time, and the size of the inferred database, of lifted analyses compare to brute-force analysis?

\subsection{Fact Extraction}

\begin{table*}[htbp]
  \caption{Fact extraction time (in ms) and DB size (in KB): Average Product (P-Avg) vs SPL for all three analyses.}
  \label{tbl:factExtraction}
  \vspace{-0.1in}
  \small
  \centering
    \begin{tabular}{lrrrrrrrrrrrr}
        \toprule  
        \multicolumn{1}{r}{} & \multicolumn{4}{c}{\textit{\textbf{insens}}} & \multicolumn{4}{c}{\textit{\textbf{1Type+Heap}}} & \multicolumn{4}{c}{\textit{\textbf{Taint-1Call+Heap}}} \\
        \cmidrule(r){2-13}    
        \multicolumn{1}{r}{} & \multicolumn{2}{c}{\textit{\textbf{Time(ms)}}} & \multicolumn{2}{c}{\textit{\textbf{DB(KB)}}} & \multicolumn{2}{c}{\textit{\textbf{Time(ms)}}} & \multicolumn{2}{c}{\textit{\textbf{DB(KB)}}} & \multicolumn{2}{c}{\textit{\textbf{Time(ms)}}} & \multicolumn{2}{c}{\textit{\textbf{DB(KB)}}} \\
        \cmidrule(r){2-13}   
        \multicolumn{1}{r}{} & \multicolumn{1}{l}{\textbf{P-Avg}} & \multicolumn{1}{l}{\textbf{SPL}} & \multicolumn{1}{l}{\textbf{P-Avg}} & \multicolumn{1}{l}{\textbf{SPL}} & \multicolumn{1}{l}{\textbf{P-Avg}} & \multicolumn{1}{l}{\textbf{SPL}} & \multicolumn{1}{l}{\textbf{P-Avg}} & \multicolumn{1}{l}{\textbf{SPL}} & \multicolumn{1}{l}{\textbf{P-Avg}} & \multicolumn{1}{l}{\textbf{SPL}} & \multicolumn{1}{l}{\textbf{P-Avg}} & \multicolumn{1}{l}{\textbf{SPL}} \\
        \midrule
        \textbf{BerkeleyDB} & 5,136 & 4,541 & 31,892  & 49,725 & 4,651 & 4,809 & 71,536  & 122,922 & 4,647 & 4,667 & 64,497  & 112,060 \\
        \textbf{GPL} & 816   & 814   & 175   & 409   & 782   & 876   & 245   & 593   & 789   & 802   & 188   & 462 \\
        \textbf{Lampiro} & 4,413 & 4,475 & 41,100  & 41,170 & 4,425 & 4,145 & 149,521  & 149,686 & 4,237 & 4,436 & 230,035  & 230,370 \\
        \textbf{MM08} & 1,226 & 1,364 & 1,921  & 3,259 & 1,250 & 1,372 & 3,878  & 6,990 & 1,255 & 1,252 & 4,234  & 7,829 \\
        \textbf{Prevayler} & 1,416 & 1,554 & 3,230  & 4,407 & 1,453 & 1,454 & 5,882  & 8,630 & 1,502 & 1,404 & 3,917  & 5,534 \\
        \bottomrule
    \end{tabular}%
\end{table*}%

\begin{figure}[t]
\includegraphics[width=0.5\textwidth]{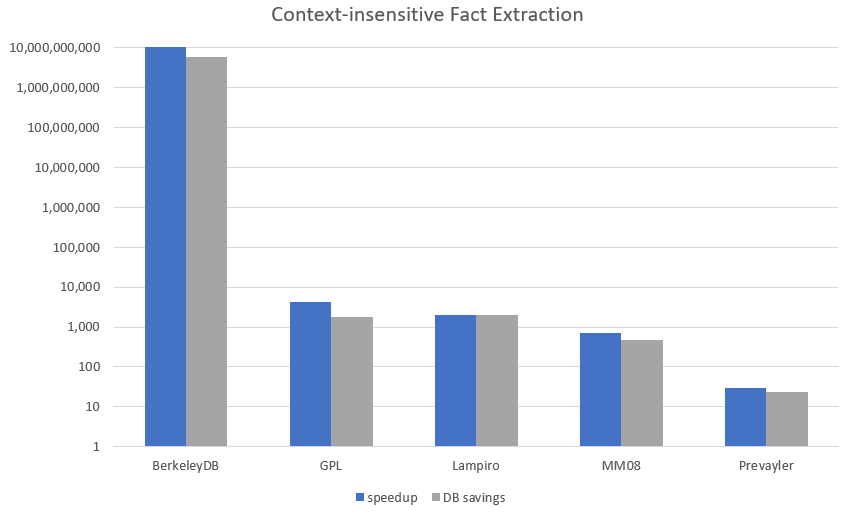}
\caption{Context-insensitive Fact extraction speedup and DB savings factors: SPL vs average product.}
\label{fig:graph_fact}
\vspace{-0.1in}
\end{figure}

Table~\ref{tbl:factExtraction} summarizes the "average" performance of product-level fact
extraction (\term{P-Avg}) and that of the lifted fact extraction for the entire product line (\term{SPL}). For each of the three analyses, we compare fact extraction time (in milliseconds) and the size of the extracted database (in KB).  For example, for context-insensitive analysis, average fact extraction time of a single product of Prevayler is 1,416ms, and the average size of the extracted fact database is 3,230KB. On the other hand, extracting facts from the whole Prevayler SPL at once takes 1,554ms, and the extracted fact database is 4,407KB.  
The difference between P-Avg Time and SPL Time is very small for all three analyses and five benchmarks, which is expected since extraction is syntactic and thus its time is proportional to code-base size, not the number of features.
 Size of the extracted database is noticeably bigger for 
 lifted extraction (DB SPL columns) because lifted facts are augmented with presence conditions.

To evaluate the savings attributed to lifted fact extraction compared to brute-force extraction in terms of time and space, we compute the speedup and space saving factors (P-Avg * $|\Conf(\SPL)|$ / SPL).
Fig.~\ref{fig:graph_fact} shows a log-scale bar graph of lifted fact extraction speedup and space savings for context-insensitive analysis. The other two analyses exhibit a similar trend and are omitted here. 
The figure shows that the time and space savings are proportional to the number of valid configurations of the product line. For example, Lampiro has 2048 valid configurations, and its lifted fact extraction is 2020 times faster than brute-force, with a database 2045 times smaller than the total space of brute-force databases. On the other hand, Prevayler has only 32 valid configurations, with an \emph{insens} lifting speedup factor of 29, and a space savings factor of 23.
Since different analyses typically require different facts, the size of the fact database also varies from one analysis to another. Experimental results do not show a direct correlation between an analysis and the size of its fact database. For example, in Lampiro, the \CallHeapTaint~databases are significantly bigger than those of \TypeHeap. BerkeleyDB, on the other hand, exhibits the opposite trend.
 
\subsection{Inference}

\begin{table*}[htbp]
  \caption{Inference time (in ms) and inferred DB size (in KB): Average Product (P-Avg) vs SPL for the three analyses.}
  \label{tbl:inference}%
  \vspace{-0.1in}
  \small
  \centering
    \begin{tabular}{lrrrrrrrrrrrr}
    \toprule
   \multicolumn{1}{r}{} & \multicolumn{4}{c}{\textit{\textbf{insens}}} & \multicolumn{4}{c}{\textit{\textbf{1Type+Heap}}} & \multicolumn{4}{c}{\textit{\textbf{Taint-1Call+Heap}}} \\
    \cmidrule(r){2-13}
        \multicolumn{1}{r}{} & \multicolumn{2}{c}{\textit{\textbf{Time(ms)}}} & \multicolumn{2}{c}{\textit{\textbf{DB(KB)}}} & \multicolumn{2}{c}{\textit{\textbf{Time(ms)}}} & \multicolumn{2}{c}{\textit{\textbf{DB(KB)}}} & \multicolumn{2}{c}{\textit{\textbf{Time(ms)}}} & \multicolumn{2}{c}{\textit{\textbf{DB(KB)}}} \\
    \cmidrule(r){2-13}
       \multicolumn{1}{r}{} & \multicolumn{1}{l}{\textbf{P-Avg}} & \multicolumn{1}{l}{\textbf{SPL}} & \multicolumn{1}{l}{\textbf{P-Avg}} & \multicolumn{1}{l}{\textbf{SPL}} & \multicolumn{1}{l}{\textbf{P-Avg}} & \multicolumn{1}{l}{\textbf{SPL}} & \multicolumn{1}{l}{\textbf{P-Avg}} & \multicolumn{1}{l}{\textbf{SPL}} & \multicolumn{1}{l}{\textbf{P-Avg}} & \multicolumn{1}{l}{\textbf{SPL}} & \multicolumn{1}{l}{\textbf{P-Avg}} & \multicolumn{1}{l}{\textbf{SPL}} \\
    \midrule
        \textbf{BerkeleyDB} & 9,184 & 10,810 & 141,728 & 200,655 & 13,598 & 17,273 & 318,349 & 483,206 & 17,479 & 21,474 & 285,473 & 443,737 \\
    
        \textbf{GPL} & 4,422 & 4,517 & 2,033 & 3,528 & 4,794 & 4,718 & 3,237 & 5,675 & 8,999 & 8,861 & 2,500 & 4,450 \\
    
        \textbf{Lampiro} & 8,264 & 8,111 & 245,933 & 246,285 & 21,372 & 20,725 & 980,689 & 981,549 & 44,365 & 45,996 & 1,393,038 & 1,394,826 \\
    
        \textbf{MM08} & 4,596 & 4,720 & 8,788 & 13,021 & 5,106 & 5,142 & 19,058 & 29,453 & 9,340 & 9,306 & 18,302 & 29,383 \\
    
        \textbf{Prevayler} & 4,908 & 5,334 & 15,856 & 19,808 & 5,852 & 6,013 & 29,605 & 38,747 & 9,785 & 9,717 & 20,611 & 26,279 \\
    \bottomrule
    \end{tabular}%
\end{table*}%

\begin{figure}[t]
    \includegraphics[width=0.5\textwidth]{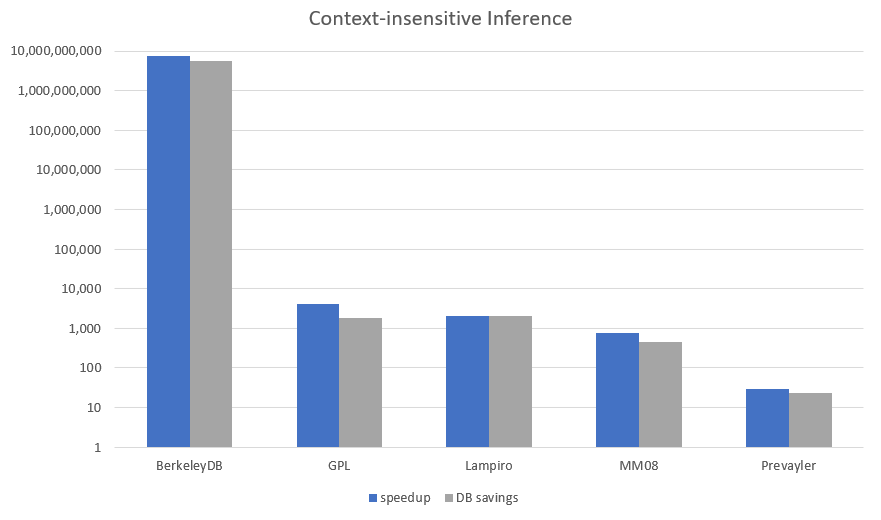}
    \caption{Context-insensitive Inference speedup and DB savings factors: SPL vs average product.}
    \label{fig:graph_inference}
    \vspace{-0.1in}
\end{figure}
Table~\ref{tbl:inference} summarizes the performance of lifted analyses on the entire product line (SPL) and that of product-level analyses on an average product (P-Avg). For example, running \TypeHeap~on an average MM08 product, inference is estimated to take 4,596ms, resulting in a database of 8,788KB. Running the same analysis on the whole MM08 product line though takes 8,788ms, resulting in a 13,021KB database.  Fig.~\ref{fig:graph_inference} is a log-scale bar graph of the speedup factor and the DB space savings factor for \insens. Speedup and space savings trends are again proportional to the number of valid configurations. For example, for BerkeleyDB, lifted \insens~is about 7.4 billion times faster than brute-force, with a DB 5.6 billion times smaller. All three analyses show similar speedup and disk space savings trends. 

Recall that the theoretic bottleneck of the lifted inference algorithm (Algorithm~\ref{alg:inference_lifted}) is the satisfiability checks performed when conjoining two PCs.  Since propositional satisfiability is  \term{NP-complete}, we wanted to evaluate whether it is a bottleneck in practice.
While SAT checks are not required to maintain correctness of the lifted inference algorithm,
we perform them in order to avoid generating spurious facts that do not 
exist in any product. 
An UNSAT presence condition denotes an empty set of products, but what about PCs 
denoting sets of invalid product configurations? The \term{Feature Model (FM)} of a product line specifies which product configurations are valid and which are not. If a fact belongs only to a set of configurations excluded by the FM, then this fact can be removed. Removing spurious facts saves DB space, but, more importantly, keeps the set of facts searched by the inference algorithm as small as possible, improving the overall performance. We study the impact of SAT checking and using the FM below.

\begin{table*}[htbp]
  \caption{SAT vs. noSAT. Time in milliseconds, Inferred DB in KB.}
  \label{tbl:noSAT}%
  \vspace{-0.1in}
  \small
  \centering
    \begin{tabular}{lrrrrrrrrrrrr}
        \toprule
            \multicolumn{1}{r}{} & \multicolumn{4}{c}{\textit{\textbf{insens}}} & \multicolumn{4}{c}{\textit{\textbf{1Type+Heap}}} & \multicolumn{4}{c}{\textit{\textbf{Taint-1Call+Heap}}} \\
        \cmidrule(r){2-13}
            \multicolumn{1}{r}{} & \multicolumn{2}{c}{\textit{\textbf{Time(ms)}}} & \multicolumn{2}{c}{\textit{\textbf{DB(KB)}}} & \multicolumn{2}{c}{\textit{\textbf{Time(ms)}}} & \multicolumn{2}{c}{\textit{\textbf{DB(KB)}}} & \multicolumn{2}{c}{\textit{\textbf{Time(ms)}}} & \multicolumn{2}{c}{\textit{\textbf{DB(KB)}}} \\
        \cmidrule(r){2-13}
            \multicolumn{1}{r}{} & \multicolumn{1}{l}{\textbf{SPL}} & \multicolumn{1}{l}{\textbf{noSAT}} & \multicolumn{1}{l}{\textbf{SPL}} & \multicolumn{1}{l}{\textbf{noSAT}} & \multicolumn{1}{l}{\textbf{SPL}} & \multicolumn{1}{l}{\textbf{noSAT}} & \multicolumn{1}{l}{\textbf{SPL}} & \multicolumn{1}{l}{\textbf{noSAT}} & \multicolumn{1}{l}{\textbf{SPL}} & \multicolumn{1}{l}{\textbf{noSAT}} & \multicolumn{1}{l}{\textbf{SPL}} & \multicolumn{1}{l}{\textbf{noSAT}} \\
        \midrule
            \textbf{BerkeleyDB} & 10,810 & 10,879 & 49,725 & 53,151 & 17,273 & 17,784 & 122,922 & 143,720 & 21,474 & 21,920 & 112,060 & 128,427 \\
        
            \textbf{GPL} & 4,517 & 4,496 & 409   & 472   & 4,718 & 4,667 & 593   & 765   & 8,861 & 8,812 & 462   & 595 \\
        
            \textbf{Lampiro} & 8,111 & 8,105 & 41,170 & 41,170 & 20,725 & 21,224 & 149,686 & 149,710 & 45,996 & 47,132 & 230,370 & 230,370 \\
        
            \textbf{MM08} & 4,720 & 4,689 & 3,259 & 3,762 & 5,142 & 5,118 & 6,990 & 8,775 & 9,306 & 9,270 & 7,829 & 9,082 \\
        
            \textbf{Prevayler} & 5,334 & 5,050 & 4,407 & 4,638 & 6,013 & 5,940 & 8,630 & 9,392 & 9,717 & 9,903 & 5,534 & 5,869 \\
        \bottomrule
    \end{tabular}%
\end{table*}%

\begin{figure}[t]
    \includegraphics[width=0.5\textwidth]{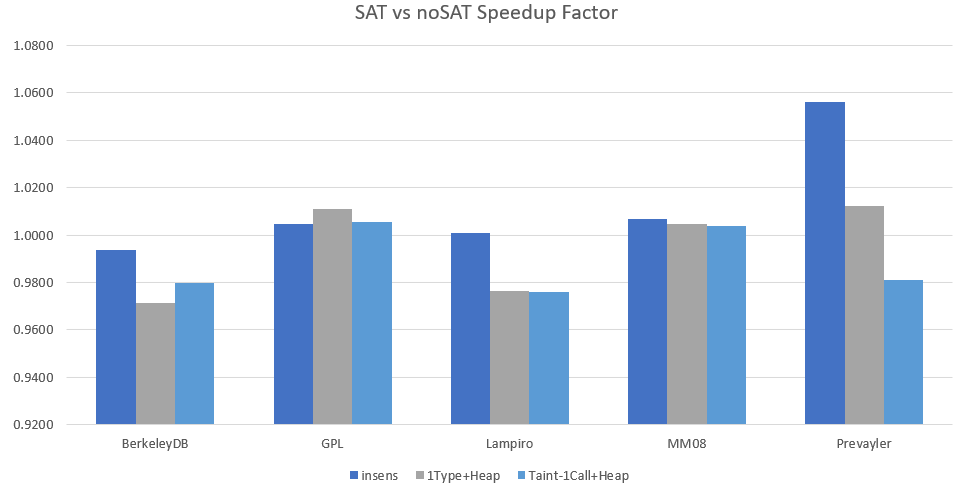}
    \caption{Inference time: SPL vs. SPL with SAT checking disabled for all three analyses.}
    \label{fig:graph_noSATTime}
    \vspace{-0.1in}
\end{figure}
\begin{figure}[t]
    \includegraphics[width=0.5\textwidth]{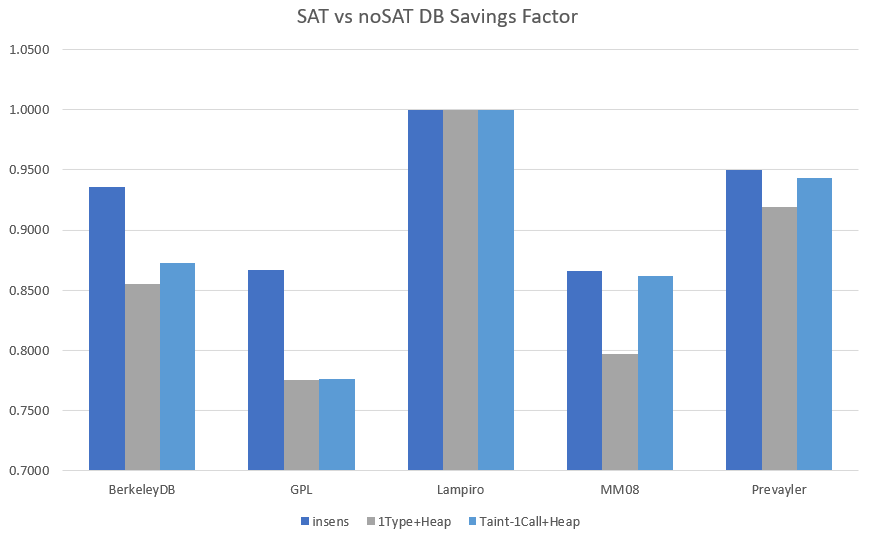}
    \caption{Inferred database size (KB): SPL vs. SPL with SAT checking disabled for all three analyses.}
    \label{fig:graph_noSATSpace}
    \vspace{-0.1in}
\end{figure}

\RQ{2.1}: How much does SAT checking contribute to the processing time of the lifted Datalog engine?

Table~\ref{tbl:noSAT} summarizes the
performance of our lifted analyses and the same analyses with SAT checking disabled (noSAT). Fig.~\ref{fig:graph_noSATTime} and Fig.~\ref{fig:graph_noSATSpace} show the noSAT-associated speedup and 
database size savings,
respectively. 
Recall that we represent PCs using BDDs.
SAT checking over BDDs is a constant-time operation~\cite{Huth:2004}.
Since conjoining and disjoining BDDs can take exponential time, we disable all BDD operations, keeping only the textual representation of PCs.
A speedup factor below 1.0 means that disabling SAT checks slows down inference. This is
what we observed for most of the benchmarks.  We believe that the slowdown is because
of the use of textual representation of PCs which resulted in a much bigger PC table, with slower
lookup times.
We also do not see any DB savings because non-canonically represented PCs tend to be longer than BDD-based
ones, resulting, on average, in more characters (and bytes) per PC.
We note that the number of features is relatively low in all of our benchmarks. BDD-based SAT solving is known to perform well on such small number of propositional variables. With product lines of hundreds or thousands of features, it is possible that noSAT might result in performance improvements.

\begin{table*}[htbp]
  \caption{SPL vs. SPL+FM. Time in milliseconds, Inferred DB in KB.}
  \label{tbl:FM}%
  \vspace{-0.1in}
  \small
  \centering
    \begin{tabular}{lrrrrrrrrrrrr}
    \toprule
        \multicolumn{1}{r}{} & \multicolumn{4}{c}{\textit{\textbf{insens}}} & \multicolumn{4}{c}{\textit{\textbf{1Type+Heap}}} & \multicolumn{4}{c}{\textit{\textbf{Taint-1Call+Heap}}} \\
    \cmidrule(r){2-13}
        \multicolumn{1}{r}{} & \multicolumn{2}{c}{\textit{\textbf{Time(ms)}}} & \multicolumn{2}{c}{\textit{\textbf{DB(KB)}}} & \multicolumn{2}{c}{\textit{\textbf{Time(ms)}}} & \multicolumn{2}{c}{\textit{\textbf{DB(KB)}}} & \multicolumn{2}{c}{\textit{\textbf{Time(ms)}}} & \multicolumn{2}{c}{\textit{\textbf{DB(KB)}}} \\
    \cmidrule(r){2-13}
        \multicolumn{1}{r}{} & \multicolumn{1}{l}{\textbf{SPL}} & \multicolumn{1}{l}{\textbf{SPL+FM}} & \multicolumn{1}{l}{\textbf{SPL}} & \multicolumn{1}{l}{\textbf{SPL+FM}} & \multicolumn{1}{l}{\textbf{SPL}} & \multicolumn{1}{l}{\textbf{SPL+FM}} & \multicolumn{1}{l}{\textbf{SPL}} & \multicolumn{1}{l}{\textbf{SPL+FM}} & \multicolumn{1}{l}{\textbf{SPL}} & \multicolumn{1}{l}{\textbf{SPL+FM}} & \multicolumn{1}{l}{\textbf{SPL}} & \multicolumn{1}{l}{\textbf{SPL+FM}} \\
    \midrule
        \textbf{BerkeleyDB} & 10,810 & 11,693 & 49,725 & 721,927 & 17,273 & 22,276 & 122,922 & 1,859,320 & 21,474 & 24,140 & 112,060 & 1,625,278 \\
    
        \textbf{GPL} & 4,517 & 4,587 & 409   & 9,447 & 4,718 & 4,728 & 593   & 13,809 & 8,861 & 8,918 & 462   & 10,644 \\
    
        \textbf{Lampiro} & 8,111 & 8,528 & 41,170 & 325,848 & 20,725 & 22,283 & 149,686 & 1,283,639 & 45,996 & 48,688 & 230,370 & 1,843,151 \\
    
        \textbf{MM08} & 4,720 & 4,761 & 3,259 & 69,017 & 5,142 & 5,288 & 6,990 & 158,732 & 9,306 & 9,476 & 7,829 & 158,330 \\
    
        \textbf{Prevayler} & 5,334 & 5,169 & 4,407 & 8,825 & 6,013 & 5,984 & 8,630 & 17,564 & 9,717 & 9,977 & 5,534 & 11,394 \\
    \bottomrule
    \end{tabular}%
\end{table*}%

\begin{figure}[t]
    \includegraphics[width=0.5\textwidth]{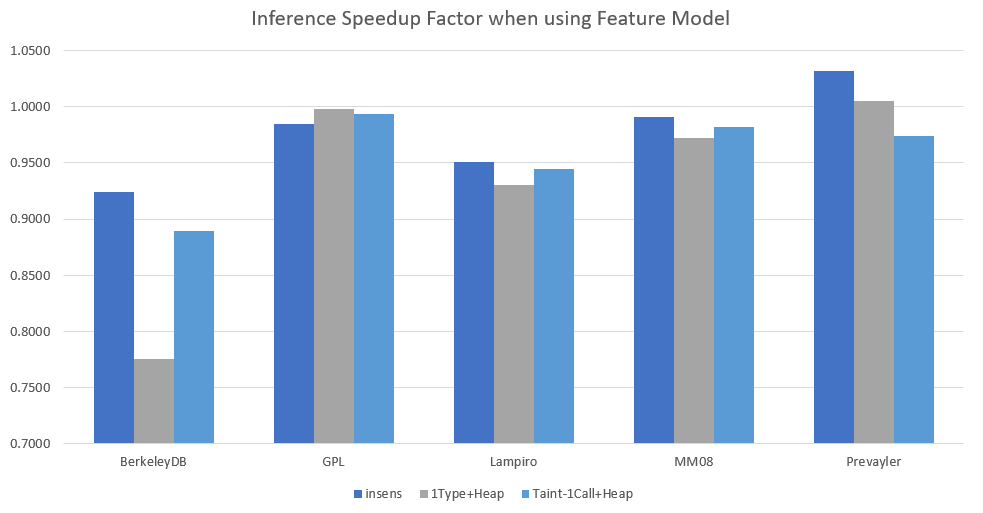}
    \caption{Inference time: SPL vs. SPL with FM for all three analyses.}
    \label{fig:graph_FMTime}
    \vspace{-0.1in}
\end{figure}
\begin{figure}[t]
    \includegraphics[width=0.5\textwidth]{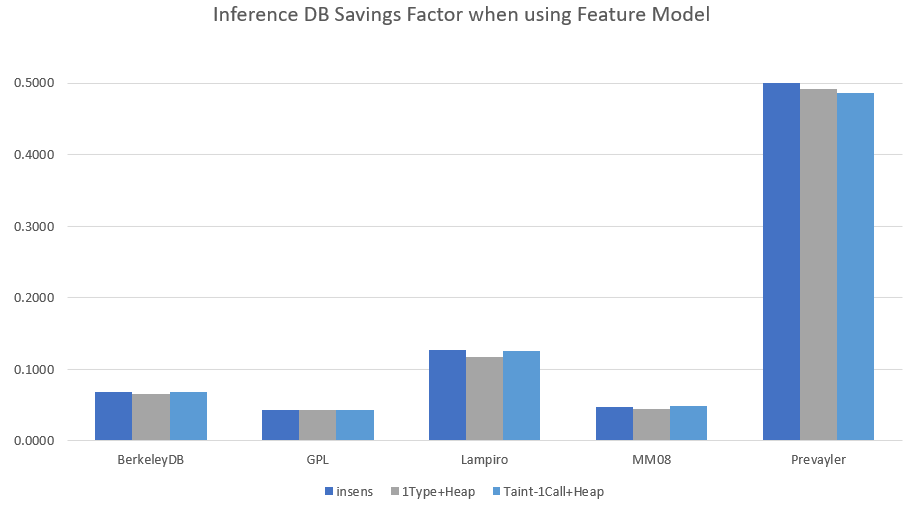}
    \caption{Inferred database size (KB): SPL vs. SPL with FM for all three analyses.}
    \label{fig:graph_FMSpace}
    \vspace{-0.1in}
\end{figure}
\begin{table}[htbp]
  \caption{The number of inferred facts with and without the Feature Model (FM).}
  \label{tbl:factCount}%
  \vspace{-0.1in}
  \footnotesize
  \centering
    \begin{tabular}{lrrrrrr}
    \toprule
          & \multicolumn{2}{c}{\textit{\textbf{insens}}} & \multicolumn{2}{c}{\textit{\textbf{1Type+Heap}}} & \multicolumn{2}{c}{\textit{\textbf{Taint-1Call+Heap}}} \\
    \cmidrule(r){2-7}
          & \multicolumn{1}{l}{\textbf{SPL}} & \multicolumn{1}{l}{\textbf{SPL+FM}} & \multicolumn{1}{l}{\textbf{SPL}} & \multicolumn{1}{l}{\textbf{SPL+FM}} & \multicolumn{1}{l}{\textbf{SPL}} & \multicolumn{1}{l}{\textbf{SPL+FM}} \\
    \midrule
    \textbf{BerkeleyDB} & 200,655 & 200,650 & 483,206 & 483,201 & 443,737 & 443,732 \\
    \textbf{GPL} & 3,528 & 3,128 & 5,675 & 4,821 & 4,450 & 3,800 \\
    \textbf{Lampiro} & 246,285 & 246,221 & 981,549 & 980,823 & 1,394,826 & 1,394,825 \\
    \textbf{MM08} & 13,021 & 13,021 & 29,453 & 29,452 & 29,383 & 29,381 \\
    \textbf{Prevayler} & 19,808 & 19,808 & 38,747 & 38,747 & 26,279 & 26,279 \\
    \bottomrule
    \end{tabular}%
    \vspace{-0.1in}
\end{table}%

\RQ{2.2}: What is the effect of taking the feature model (FM) of an SPL into consideration when running Datalog variability-aware analyses, in terms of inference time and DB size? 

Table~\ref{tbl:FM} compares the performance of our lifted analyses against the same analyses 
using the feature model (SAT+FM).  SAT+FM entails conjoining the feature model to each PC before
performing the satisfiability check. If the PC encodes a set of products excluded by the FM,
the conjunction is unsatisfiable.  Fig.~\ref{fig:graph_FMTime} and Fig.~\ref{fig:graph_FMSpace} show the SAT+FM-associated speedup and space savings, respectively.   For most of the experiments, using the FM results in slowdowns and larger DBs.  FM usage reduces the number of inferred facts, as
observed in Table~\ref{tbl:factCount}, but the reduction is relatively small.
 On the other hand, PCs now conjoined with the FM are more complex,  taking longer to construct (hence the performance penalty), and more bytes to store (hence the bigger DBs).

\subsection{Threats to Validity}
%
For internal threats, we note that all of our benchmarks are CIDE product lines.  While our lifting
approach and implementation are not specific to CIDE, CIDE limitations make the benchmarks biased towards specific annotation patterns. For example, only well-behaved annotations are allowed.  Furthermore, since feature expressions do not support feature negation, all input PCs are satisfiable, as well as conjunctions over those PCs. We experimented with disabling satisfiability checks to see how much they affect performance (while they always return true for this set of benchmarks). As noted previously, the overhead of those checks is marginal.

Another internal threat is that we approximate average product performance using only two samples (the maximum and the minimum). These averages are not expected to be completely accurate, but are used to give a brute-force estimate.  Our experiments show performance improvement of
several orders of magnitude, so we believe that our approximation (compared to more elaborate configuration sampling techniques) can be tolerated.

 Finally,  all of the the analyses we used come from the Doop framework. Again, nothing in our lifted inference engine is Doop-specific, but extraction of annotated features is a part of Doop. Other frameworks can extract fact annotations in a similar fashion.

\section{Related Work}
\label{sec:related}

Different kinds of software analyses have been re-implemented to support product lines~\cite{Thum:2014}. For example, the TypeChef project~\cite{Kastner:2011,Kastner:2012} implements variability aware parsers~\cite{Kastner:2011} and type checkers~\cite{Kastner:2012} for Java and C. The SuperC project~\cite{Gazzillo:2012} is another C language variability-aware parser. The Henshin~\cite{Arendt:2010} graph transformation engine was lifted to support product lines of graphs~\cite{Salay:2014}. Those lifted analyses were written from scratch, without reusing any components from their respective product-level analyses. Our approach, on the other hand, lifts an entire class of product-level analyses written as Datalog rules, by lifting their inference engine (and extracting presence conditions together with facts).

SPL\textsuperscript{Lift}~\cite{Bodden:2013} extends IFDS~\cite{Reps:1995} data flow analyses to product lines. Model checkers based on Featured Transition Systems~\cite{Classen:2013} check temporal properties of transition systems where transitions can be labeled by presence conditions. Both of these
SPL analyses  use almost the same single-product analyses on a lifted data representation. At a high level, our approach is similar in the sense that the logic of the original analysis is preserved, and only data is augmented with presence conditions. Still, our approach is unique because we do not touch any of the Datalog rules comprising the analysis logic itself.

Syntactic transformation techniques have been suggested for lifting abstract interpretation analyses to SPLs~\cite{Midtgaard:2015}. This line of work outlines  a systematic approach to lifting abstract interpretation analyses, together with correctness proofs.  Yet this approach is not automated  which means lifted analyses still need to be written from scratch, albeit while being guided by some systematic guidelines. 

Datalog engines have been used as backends by several program analysis frameworks. In addition to Doop, examples of analysis frameworks based on logic programming include XSB~\cite{Dawson:1996}, bddbddb~\cite{Whaley:2005} and Paddle~\cite{Lhotak:2008}. DIMPLE~\cite{Benton:2007} is another declarative pointer analysis framework where rules are written in Prolog. To the best of our knowledge, all those program analysis frameworks have been targeting single products. Our primary contribution is lifting this class of analyses to SPLs in a generic way, without making any analysis-specific assumptions. In addition, our approach can be systematically implemented in any Datalog engine used by any of those frameworks.

\section{Conclusion}
\label{sec:conclusion}
In this paper, 
presented an algorithm for lifting Datalog-based software analyses to SPLs.
We implemented this algorithm in the \souffle~Datalog engine, and evaluated performance
of three program analyses from the Doop framework on a suite of SPL benchmarks. Comparing our lifted implementation to brute-force analysis of each product individually, we show significant savings in terms of  processing time and database size.

Our \souffle~implementation only lifts the interpreter but not the code generator (compiler). Aggregation functions (e.g., sum, count) are not currently lifted either.  We plan to address these
implementation level limitations in future work.   We also plan to evaluate lifted \souffle~on analyses frameworks other than Doop. Another track for future work is lifting Datalog rules, not just facts. This would allow us to apply a product line of analyses to an SPL all at once. Our work can also be extended to lift Horn-Clause based analysis and verification tools~\cite{Bjrner:2015} to support SPLs.

\section* {Acknowledgments}
We thank Azadeh Farzan for discussions related to this work, and anonymous reviewers for their feedback on an earlier version of this paper. This work was supported by General Motors and NSERC.

\balance 
\bibliographystyle{ACM-Reference-Format}
\bibliography{datalog,spl} 
\end{document}